\documentclass[11pt,twoside]{article}
\usepackage[round]{natbib}
\usepackage{float}
\usepackage{geometry}
\usepackage{array}
\usepackage{makecell}
\input{commands-mm}

\begin{document}

\begin{center}

  {\LARGE
  \textbf{Signal recovery in the polychromatic computed tomography model: Injectivity and complexity}
  
  }

\vspace*{.2in}

{\large{
\begin{tabular}{ccc}
Xuanzhou Chen$^\dagger$, Ashwin Pananjady$^{\star,\dagger}$
\end{tabular}
}}
\vspace*{.2in}

\begin{tabular}{c}
Georgia Institute of Technology, School of Industrial and Systems Engineering$^\star$ and \\
School of Electrical and Computer Engineering$^\dagger$ \end{tabular}

\vspace*{.2in}

\today

\vspace*{.2in}

\begin{abstract}
We consider a nonlinear model motivated by polychromatic computed tomography (CT). Here, $W$ distinct $d$-dimensional signals $x^*_1, \ldots, x^*_W \in \mathbb{R}^d$ must be recovered from $n$ measurements $(a_i, y_i)_{i = 1}^n$ that obey the nonlinear model $\mathbb{E}[y_i|a_i] = h(\inprod{a_i}{x^*_1}, \ldots, \inprod{a_i}{x^*_W})$, where $h: \mathbb{R}^W \to \mathbb{R}$ is a known nonlinearity that models a certain type of exponential attenuation law. Even when there is no noise in the measurements, the sample size $n$ (for any measurement ensemble $\{a_i\}_{i = 1}^n$) must exceed the number of unknowns $Wd$ to guarantee that the underlying signals are identifiable. We construct a measurement ensemble that ensures perfect signal recovery almost surely provided $n \geq Wd + W - 1$, thereby isolating the injectivity threshold up to an additive factor $W - 1$. We also study computational complexity of signal recovery in this model with a general measurement ensemble $\{a_i\}$. We show that if $W \geq 2$, then there is a measurement ensemble for which deciding whether there exist signals consistent with the measurements is NP-hard. In particular, this implies that polychromatic, multimaterial CT reconstruction is NP-hard in general. This finding stands in sharp contrast to the single-material setting, for which a polynomial-time algorithm can provably perform signal recovery for any measurement ensemble.
\end{abstract}
\end{center}

\section{Introduction}
Consider the polychromatic computed tomography (CT) problem~\citep{ 
tang2023spectral, https://doi.org/10.1002/mp.15621},
where the forward model is given by
\begin{align} \label{eq:model}
\mathbb{E}[y_i|a_i] = \sum_{j = 1}^W s_j \exp(- \inprod{a_i}{x^*_j}) \quad \text{ for } i = 1, \ldots, n
\end{align}
and for each $j \in [W]$, the vector $x^*_j \in \real^d$ denotes the unknown absorptivities across the $d$ voxels for the $j$-th wavelength. These vectors are distinct and the spectral sensitivity coefficients $\{s_j\}_{j = 1}^W$ are all strictly positive\footnote{If not, we obtain a model with fewer than $W$ wavelengths.}. The inverse problem is to recover the unknown vectors $\{x^*_j\}_{j = 1}^W$ from measurements $(a_i, y_i)_{i = 1}^n$, assuming we have perfect access to $s_1, \ldots, s_W$.  Without loss of generality, we will additionally assume that $\sum_{j = 1}^W s_j = 1$, so that $(s_1, \ldots, s_W)$ can be viewed as a probability mass function supported on $W$ points. 

In real photon-counting CT, the scalar $y_i$ is the number of photons that are incident at the detector and this is modeled as a random variable obeying the conditional expectation relation~\eqref{eq:model}. In addition, the vectors $a_i$ are obtained from discretizing line integrals in the Radon transform.
However, given that our focus in this work is on understanding the fundamental limits of signal recovery from the mathematical forward model~\eqref{eq:model}, we study the \emph{noiseless} setting wherein $y_i = \mathbb{E}[y_i|a_i]$ a.s. We will also study the setting in which the measurement vectors $a_i$ are general and unstructured. These assumptions are commonly made in the CT model~\citep{fridovich2026gradient, charisopoulos2025nonlinear} to allow us to isolate information-theoretic and algorithmic issues. 

To develop some intuition from the model, consider the \emph{monochromatic} variant when $W = 1$. In this case, the noiseless model \eqref{eq:model} can be written as 
\begin{align}\label{eq:model_w_is_1}
    y_i = \exp(-\langle a_i, x_1^*\rangle) \quad \text{ for } i = 1, \ldots, n.
\end{align}
Clearly, signal recovery in this model can be performed by logarithmically transforming both sides of Eq.~\eqref{eq:model_w_is_1} and solving a linear system~\citep{tang2023spectral}. Thus, the unknown signal $x_1^*$ can be recovered as long as $n \geq d$ and the matrix $A$ formed by stacking the vectors $\{a_i\}_{i = 1}^n$ as rows has full column rank.\footnote{In the more realistic version of model~\eqref{eq:model_w_is_1} with Poisson noise, signal recovery can be performed via maximum likelihood estimation (i.e. Poisson regression), which corresponds to a convex optimization problem.} Taken together, these observations resolve two issues in the monochromatic model when $W=1$. First, the injectivity threshold is $n = d$, in that there exist $d$ measurements $(a_i, y_i)_{i = 1}^d$ from model~\eqref{eq:model} for which exact signal recovery can be performed. Second, the computational complexity of signal recovery is polynomial-time, in that signal recovery can always be performed in polynomial time for any measurement ensemble. 

In addition to the chromaticity of the forward model, there is another consideration that aids signal recovery in CT applications:\textit{ multimaterial decomposition}. This decomposition relies on the fact that the unknown vectors $\{x^*_j\}_{j = 1}^W$ in model~\eqref{eq:model} are all related by a  decomposition into $M$ basis materials. In particular, we posit that for some known and positive absorption coefficients $\{ \mu_{j, m} \}_{j \in [W], m \in [M]}$, we have $x^*_j = \sum_{m = 1}^M \mu_{j, m} \overline{x}_{m}$, where $\overline{x}_{m}$ collects the density of material $m$ across the $d$ voxels. If the number of materials is less than the number of wavelengths (i.e. $M < W$), then we have a compressed representation of the unknowns. If $M = W$ and the matrix formed by the absorption coefficients $\{ \mu_{j, m} \}_{j \in [W], m \in [M]}$ is invertible, then recovery of $\{\overline{x}_{m}\}_{m = 1}^M$ is equivalent to recovering $\{x^*_j\}_{j = 1}^W$. On the other hand, if $M > W$, then  $\{\overline{x}_{m}\}_{m = 1}^M$ are not identifiable even if $\{x^*_j\}_{j = 1}^W$ are, since the absorption coefficient matrix has a nontrivial nullspace.

A special case of the above decomposition is the \emph{single-material} setting, in which $M = 1$ (and $W$ is any natural number).
Recent work~\citep{lou2026accurate,kim2026perfusion} showed that in this setting, the problem of recovering the signal $\overline{x}_1$ is exactly solvable in polynomial time via a variational inequality formulation. The only requirements for this result are that the coefficients $\{s_j, \mu_j \}_{j = 1}^W$ are positive, and that the map $\overline{x}_1 \mapsto A \overline{x}_1$ is injective. This result complements the monochromatic case~\eqref{eq:model_w_is_1}. As mentioned above, when $W = 1$ and $M \geq 2$, the multimaterial recovery problem is ill-posed and we never have injectivity. Nevertheless, the problem of finding a feasible solution is polynomial-time solvable because the Poisson negative log-likelihood is convex in this case~\citep{barber2024convergence}.

\begin{table}[H]
\centering
\caption{Signal recovery in the model~\eqref{eq:model} with the
low-rank material decomposition $x^*_j = \sum_{m=1}^M \mu_{j,m}\bar{x}_m$, where $d$ is the number of voxels and $n$ the number of measurements. Each cell reports the injectivity threshold (top) and the computational status (bottom) known from prior work.}
\small
\setlength{\tabcolsep}{4pt}
\renewcommand{\arraystretch}{1.25}
\setlength{\extrarowheight}{1pt}
\begin{tabular}{|l|c|c|}
\hline
& \makecell{Single material\\ $M = 1$}
& \makecell{multimaterial\\ $M \geq 2$} \\
\hline
\makecell[l]{Monochromatic\\ $W = 1$}
& \makecell{$n \geq d$ (tight)\\ poly.\ time}
& \makecell{Never injective\\ poly.\ time} \\
\hline
\makecell[l]{Polychromatic\\ $W \geq 2$}
& \makecell{$n \geq d$ (tight)\\ poly.\ time}
& \makecell{\textbf{open}\\ \textbf{open}} \\
\hline
\end{tabular}
\label{tab:landscape}
\end{table}

Given this state of affairs (summarized in Table~\ref{tab:landscape}), our focus is on studying injectivity and complexity of signal recovery in the polychromatic CT model~\eqref{eq:model}.
Concretely, we study the following questions: \\
(Q1) What is the injectivity threshold of the polychromatic CT model? \\
(Q2) Is signal recovery in polychromatic CT solvable by a polynomial-time algorithm? \\
We answer question (Q1) by designing a random measurement ensemble $(a_i)_{i = 1}^n$ 
with $n = Wd + W - 1$ such that the $W$ unknown vectors $\{x^*_j\}_{j = 1}^W$ can be recovered from measurements $(a_i, y_i)_{i = 1}^n$ with probability $1$
(see \Cref{subsec:injectivity}). Since 
$n \geq Wd$ is a necessary condition given the number of unknowns, this isolates the injectivity threshold up to an additive factor of $W - 1$. We then address question (Q2) with a negative result, showing that signal recovery in the polychromatic CT model~\eqref{eq:model} is not solvable in polynomial time unless $\mathsf{P= NP}$. Our result is shown when $W = 2$: we establish via reduction from the $\mathsf{NP}$-hard \textsc{Partition} problem that deciding whether there exists a solution $\{x^*_j\}_{j = 1}^W$ consistent with a set of measurements $(a_i, y_i)_{i = 1}^n$ is already $\mathsf{NP}$-hard (see \Cref{subsec:computation}). Our main results are stated in \Cref{sec:main-res} and \Cref{sec:proofs} contains all proofs.
Taken together, our results resolve the open regimes alluded to in Table~\ref{tab:landscape}.
We conclude with a discussion of open problems in \Cref{sec:conclusion}.

\section{Main results}\label{sec:main-res}
    As mentioned above, we study both injectivity and complexity of the polychromatic CT model~\eqref{eq:model}.
Our results on injectivity are presented first.
\subsection{Injectivity threshold of  polychromatic CT}\label{subsec:injectivity}

Since there are $Wd$ unknown real values in the model~\eqref{eq:model}, a necessary condition for signal recovery is $n \geq Wd$. We now show that there exists a  measurement ensemble guaranteeing unique recovery provided we have the slightly stronger condition $n \geq Wd + W - 1$. Our proof is constructive: 

Choose a uniformly random $d \times d$ unitary matrix $R$ and let $r_1, \ldots, r_d$ denote its columns. Construct the measurement vectors as
\begin{subequations} \label{eq:measurement-ensemble}
\begin{align} 
a_{k} &= k r_1 \qquad \;\;\text{for } k = 1,\dots,2W-1, \label{eq:stage1}\\
a_{\ell W + k} &= r_\ell + k r_1 \;\; \text{for } \ell = 2,\dots,d; k = 0,\dots,W-1. \label{eq:stage2}
\end{align}
\end{subequations}
The total number of measurements in the ensemble~\eqref{eq:measurement-ensemble} is
\[
n = (2W-1) + (d-1)\cdot W = Wd + W - 1.
\]
In the following theorem, we claim that this ensemble can be used to perform signal recovery.
\begin{theorem}\label{thm:injectivity}
Under the random measurement ensemble~\eqref{eq:measurement-ensemble} and observations $(y_i)_{i = 1}^n$ from model~\eqref{eq:model}, the vectors $(x_j^*)_{j = 1}^W$ are uniquely recoverable from $(a_i,y_i)_{i = 1}^n$ with probability 1.
\end{theorem}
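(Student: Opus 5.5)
The plan has two stages that mirror the two stages of the ensemble~\eqref{eq:measurement-ensemble}. In the first stage we recover the projections $t_j := \inprod{r_1}{x^*_j}$ for $j \in [W]$, up to a permutation of the labels $j$. In the second stage we use these to recover $u_{\ell,j} := \inprod{r_\ell}{x^*_j}$ for $\ell = 2, \ldots, d$. Since $R$ is unitary, $x^*_j = \sum_{\ell} u_{\ell,j} r_\ell$, where $u_{1,j} := t_j$. The randomness of $R$ is used only to guarantee, almost surely, that the numbers $t_1, \ldots, t_W$ are pairwise distinct. Indeed, the $x^*_j$ are distinct and fixed, so for each pair $j \ne j'$ the event $\inprod{r_1}{x^*_j - x^*_{j'}} = 0$ has probability zero, because $r_1$ is uniformly distributed on the sphere. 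A union bound over the finitely many pairs gives distinctness with probability 1.

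\textbf{Stage 1 (Prony-type recovery).} Set $z_j := e^{-t_j} > 0$. The measurements~\eqref{eq:stage1} give
\[
y_k = \sum_{j=1}^W s_j z_j^k, \qquad k = 1, \ldots, 2W-1.
\]
These are $2W-1$ power sums of a positive measure with $W$ atoms. Here the weights $s_j$ are known, and the $z_j$ are distinct and positive.

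I would prove uniqueness as follows. Suppose a second configuration $(z'_j)$ with the same weights gives the same $y_k$. Then the signed measure $\sum_j s_j\delta_{z_j} - \sum_j s_j\delta_{z'_j}$ annihilates $z, z^2, \ldots, z^{2W-1}$. It also annihilates the constant function, since $\sum_j s_j = 1$ on both sides. So it annihilates all polynomials of degree at most $2W-1$.

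The support of this measure has at most $2W$ points. Assume for contradiction that the measure is nonzero. Pick a point $p$ carrying nonzero mass, and take the polynomial of degree at most $2W-1$ that vanishes at the other support points and not at $p$. Integrating it against the measure gives a nonzero value, a contradiction. Hence the two measures coincide, so $\{z'_j\} = \{z_j\}$ with matching weights. Ties among the $s_j$ create only a relabeling ambiguity. That ambiguity is harmless, because any relabeling that preserves the $s_j$ leaves the model unchanged, so the signals are identified as a weighted set.

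\textbf{Stage 2 (linear/Vandermonde).} For each $\ell \ge 2$, the measurements~\eqref{eq:stage2} give
\[
y_{\ell W + k} = \sum_{j} \bigl(s_j e^{-u_{\ell,j}}\bigr) z_j^k, \qquad k = 0, \ldots, W-1.
\]
With the $z_j$ known and distinct, this is a $W\times W$ Vandermonde system in the unknowns $c_{\ell,j} := s_j e^{-u_{\ell,j}}$. The system is invertible, so each $c_{\ell,j}$ is determined. Since $s_j > 0$, we recover $u_{\ell,j} = -\log(c_{\ell,j}/s_j)$.

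One subtlety needs care: the labels must be consistent across the two stages. They are, because the same $z_j$ index the columns of the Vandermonde system for every $\ell$. Thus $(t_j, u_{2,j}, \ldots, u_{d,j})$ are automatically attached to the same $j$, and we reconstruct $x^*_j$.

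\textbf{Main obstacle.} The main obstacle is Stage 1. First, the uniqueness argument must honestly use only $2W-1$ moments; the known weights together with the normalization $\sum_j s_j = 1$ play the role of the missing zeroth moment. Second, the identification must be stated correctly when some $s_j$ coincide, i.e., identification is modulo weight-preserving permutations, which is the natural notion given the model's symmetry. If any competing solution has coinciding $z'_j$, the polynomial argument still applies, since the support bound of $2W$ points only becomes smaller.
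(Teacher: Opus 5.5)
Your proposal is correct and follows essentially the same route as the paper. It uses almost-sure distinctness of $\langle r_1, x_j^*\rangle$ (via the measure-zero set of $r_1 \in \mathbb{S}^{d-1}$ orthogonal to each $x_j^* - x_k^*$, plus a union bound), a $2W$-moment problem in Stage 1 with $\sum_j s_j = 1$ supplying the zeroth moment, an invertible Vandermonde system for each $\ell \geq 2$, and reconstruction in the orthonormal basis $\{r_\ell\}$.

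The differences are refinements rather than a new route. Where the paper cites Prony's method, you prove Stage 1 uniqueness with a self-contained interpolation argument on the signed measure, and that argument also rules out competing solutions whose nodes coincide. You also state explicitly that the signals are identified only up to relabelings that preserve the $s_j$ when some weights coincide, which the paper leaves implicit.
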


Theorem~\ref{thm:injectivity} thus isolates the injectivity threshold of the polychromatic CT model up to an additive factor $W-1$. Note that when $W = 1$, the injectivity threshold is tight. Moreover, it is worth noting that the procedure for recovering the underlying signals from this measurement ensemble is also constructive, and presented in the proof (\Cref{subsec:proof_injectivity}). 

\subsection{Computational complexity of signal recovery}\label{subsec:computation}

Now suppose our task is to recover $\{x^*_j\}^{W}_{j=1}$ in model~\eqref{eq:model} from general measurements $(a_i, y_i)^n_{i=1}$ with a computationally efficient algorithm. In order to accomplish this task, we must be able to efficiently determine, given $(a_i, y_i)^n_{i=1}$, whether there exists a set of vectors $\{x^*_j\}^{W}_{j=1}$ consistent with these measurements. We study this latter question in the special case $W = 2$ with $s_1 = s_2 = 1/2$. In this case, the model~\eqref{eq:model} simplifies to
\begin{align} \label{eq:model-simple}
y_i = \frac{1}{2} \exp( - \inprod{a_i}{x^*_1} ) + \frac{1}{2} \exp( - \inprod{a_i}{x^*_2} ).
\end{align}

\begin{theorem} \label{thm:NP-hard}
Given data $(a_i, y_i)_{i = 1}^n$ with $a_i \in \real^d$ and $y_i \in \real$, the problem of deciding if there exist vectors $x^*_1, x^*_2 \in \real^d$ satisfying the system~\eqref{eq:model-simple} is NP-hard.
\end{theorem}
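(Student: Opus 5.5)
We prove the theorem by a polynomial-time reduction from \textsc{Product Partition}: given positive integers $c_1,\dots,c_d$, decide whether there is $S\subseteq[d]$ with $\prod_{k\in S}c_k=\prod_{k\notin S}c_k$. This problem is NP-hard in the strong sense (Ng et al.), so it stays hard when the $c_k$ are written in binary with polynomially many bits. We use it rather than ordinary \textsc{Partition} for a bit-complexity reason. The natural encoding of \textsc{Partition} puts the numbers into exponents, which forces either irrational data $y_i$ (values of $\cosh$ at integers) or rational data of exponential bit length (values like $2^{c_k}$). Working multiplicatively in the variables $u=\exp(-x_1)$, $v=\exp(-x_2)$ (taken coordinatewise) avoids both problems.

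\textbf{Step 1 (coordinate gadgets).} For each $k\in[d]$ we include two measurements:
\[
a=e_k,\quad y=\tfrac{c_k+1}{2}, \qquad\text{and}\qquad a=2e_k,\quad y=\tfrac{c_k^2+1}{2}.
\]
By \eqref{eq:model-simple}, writing $u_k=e^{-(x_1)_k}>0$ and $v_k=e^{-(x_2)_k}>0$, these two equations say
\[
u_k+v_k=c_k+1, \qquad u_k^2+v_k^2=c_k^2+1 .
\]
Squaring the first and subtracting the second gives $u_kv_k=c_k$. Hence $u_k,v_k$ are the two roots of $t^2-(c_k+1)t+c_k$, so $\{u_k,v_k\}=\{c_k,1\}$. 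Conversely, every choice of $S=\{k:u_k=c_k\}$ is realizable by setting $(x_1)_k=-\log u_k$ and $(x_2)_k=-\log v_k$. So the solutions of the gadget equations correspond exactly to subsets $S\subseteq[d]$.

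\textbf{Step 2 (global constraint).} Let $C=\prod_k c_k$. If $C$ is not a perfect square, no product partition can exist, so we output a fixed infeasible instance, for example the single equation $a=0$, $y=2$; this is infeasible because $a=0$ forces $y=1$. Otherwise we add the measurement $a=\mathbf{1}$ (the all-ones vector) with $y=\sqrt C$, which is an integer computable in polynomial time. With $P=\prod_k u_k=\prod_{k\in S}c_k$ we have $\prod_k v_k=C/P$, so this equation reads
\[
\tfrac12\bigl(P+C/P\bigr)=\sqrt C .
\]
By AM--GM the left side is at least $\sqrt C$, with equality if and only if $P=\sqrt C$. Equality therefore holds exactly when $\prod_{k\in S}c_k=\prod_{k\notin S}c_k$.

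\textbf{Step 3 (conclusion and bookkeeping).} The constructed system of $2d+1$ equations is feasible if and only if the \textsc{Product Partition} instance is a yes-instance. All $a_i$ and $y_i$ are integers or half-integers, and their bit lengths are polynomial in the input size, since $\log C=\sum_k\log c_k$. The reduction is therefore polynomial, and strong NP-hardness of the source problem transfers, which proves the theorem.

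The main obstacle is the encoding issue that motivated the choice of source problem: any reduction must produce finite-precision rational data of polynomial size. The multiplicative formulation is how we handle it. The remaining step that needs care is the uniqueness argument in Step 1, namely that the two moment equations force $\{u_k,v_k\}=\{c_k,1\}$ with no spurious real solutions; this holds because the map $x\mapsto e^{-x}$ is a bijection from $\real$ onto $(0,\infty)$.
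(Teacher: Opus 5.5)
Your reduction is correct. It has the same skeleton as the paper's: per-coordinate gadgets force a binary choice, and one all-ones measurement has a value that is minimized exactly at a balanced split. You run it in the multiplicative coordinates $u=e^{-x_1}$, $v=e^{-x_2}$, and you reduce from \textsc{Product Partition} instead of \textsc{Partition}.

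The paper uses $A=[I_d;-I_d;\mathbf{1}^\top]$. Set $c_k=e^{-b_k}$. The paper's pair $\pm e_k$ then gives $u_k+v_k=c_k+1$ and $1/u_k+1/v_k=1/c_k+1$, hence $u_kv_k=c_k$ and $\{u_k,v_k\}=\{c_k,1\}$. This is exactly your gadget, except that you get $u_kv_k$ from the second power sum via $2e_k$. The paper's final step, $\cosh\bigl((\sum_{S}b_k-\sum_{S^c}b_k)/2\bigr)=1$ with $y_{2d+1}=e^{-\sum_k b_k/2}=\sqrt{C}$, is your AM--GM step in logarithmic coordinates.

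What each route buys:
\begin{itemize}
\item The paper's route reduces from the textbook \textsc{Partition} problem with no case analysis. However, its data $e^{\mp b_i/2}\cosh(b_i/2)$ and $e^{-\sum_i b_i/2}$ are transcendental. Its footnote appeal to rounding is delicate, because exact feasibility is not preserved when $y$ is rounded.
\item Your route gives a genuine Karp reduction with half-integer data of polynomial bit length. The price is reliance on the less standard hardness of \textsc{Product Partition} and a perfect-square preprocessing branch.
\end{itemize}

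One overstatement to fix: strong NP-hardness does not transfer to the target problem, since $\sqrt{C}$ has magnitude exponential in $d$. What you obtain is ordinary NP-hardness, which is all the theorem claims. For that, NP-hardness of \textsc{Product Partition} under binary encoding is all you need.
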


While Theorem~\ref{thm:NP-hard} is stated for the polychromatic CT model, the result implies $\mathsf{NP}$-hardness of general multimaterial CT. To see this, consider a multimaterial model with $W = 2$ and $M= 2$ and a diagonal absorption coefficient matrix $\mu \in \real^{2 \times 2}$. This results in the model
\begin{align} \label{eq:model-MM}
y_i = \frac{1}{2} \exp( - \mu_{1,1}\inprod{a_i}{\overline{x}_1} ) + \frac{1}{2} \exp( - \mu_{2, 2}\inprod{a_i}{\overline{x}_2} ).
\end{align}
Clearly, if $\mu_{1, 1}$ and $\mu_{2, 2}$ are nonzero, then hardness of recovering the signal $(x^*_1, x^*_2)$ in model~\eqref{eq:model-simple} immediately implies hardness of recovering the signal $(\overline{x}_1, \overline{x}_2)$ in model~\eqref{eq:model-MM}.

\begin{remark}\label{rmk:search}
While Theorem~\ref{thm:NP-hard} is stated for the problem of deciding whether or not a consistent solution exists, finding a pair $(x_1^*,x_2^*)$ consistent with $(A,y)$ is NP-hard even under the promise that a consistent solution exists. This is because the \textsc{Partition} problem remains $\mathsf{NP}$-hard even under a promise. In particular, finding a feasible \textsc{Partition} is $\mathsf{NP}$-hard even under the promise that a feasible solution exists~\citep{papadimitriou1998combinatorial}.
\end{remark}

\section{Proofs}\label{sec:proofs}

We now present the proofs of our two main theorems.

\subsection{Proof of injectivity threshold}\label{subsec:proof_injectivity}

We use two lemmas to prove Theorem~\ref{thm:injectivity}. Lemma~\ref{lem_1} is the key signal recovery lemma using ensemble~\eqref{eq:measurement-ensemble} and relies on an assumption involving the matrix $R$.  Lemma~\ref{lem_2} shows that this assumption holds true with probability $1$ under our randomness condition on $R$. 

\begin{lemma}\label{lem_1}
If the scalars $\langle r_1, x_1^* \rangle, \ldots, \langle r_1, x_W^* \rangle$ are distinct, then the measurement ensemble~\eqref{eq:measurement-ensemble} guarantees unique recovery of the scalars $\langle r_\ell, x^*_j \rangle$ for all $\ell \in [d]$ and $j \in [W]$.
\end{lemma}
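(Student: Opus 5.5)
Set $u_j := e^{-\langle r_1, x_j^*\rangle} > 0$ and $v_{\ell j} := e^{-\langle r_\ell, x_j^*\rangle} > 0$ for $\ell = 2, \ldots, d$. By Eq.~\eqref{eq:stage1}, the first stage gives
\[
y_k = \sum_{j=1}^W s_j u_j^k, \qquad k = 1, \ldots, 2W-1.
\]
By Eq.~\eqref{eq:stage2}, the second stage gives
\[
y_{\ell W + k} = \sum_{j=1}^W (s_j v_{\ell j})\, u_j^k, \qquad k = 0, \ldots, W-1.
\]
The plan has two steps. First, recover the $u_j$ from the first-stage power sums, which is a Prony-type moment problem. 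Second, solve a Vandermonde linear system in each later block to recover the products $s_j v_{\ell j}$. Taking logarithms then yields every $\langle r_\ell, x_j^*\rangle$. Recovery here means identification as a matched collection, with the index $j$ carried consistently across all $\ell$. The labeling of the wavelengths is fixed by the known weights $s_j$, and I will address it below.

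\textbf{Step 1 (first stage).} Set $m_0 := \sum_j s_j = 1$ and $m_k := y_k$ for $1 \le k \le 2W-1$. These are the moments $m_k = \int t^k \, d\nu(t)$ of the discrete measure $\nu = \sum_j s_j \delta_{u_j}$. By hypothesis, $\nu$ has exactly $W$ distinct atoms, all with positive weights. I would proceed as follows.
\begin{enumerate}
\item Form the $W \times W$ Hankel matrix $H = (m_{p+q})_{p,q=0}^{W-1}$. It factors as $H = V^\top \mathrm{diag}(s) V$, where $V$ is the Vandermonde matrix in the $u_j$. Since the atoms are distinct and the weights positive, $H$ is invertible.
\item Let $P(t) = \prod_j (t - u_j) = t^W + c_{W-1} t^{W-1} + \cdots + c_0$. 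Its coefficients satisfy $\sum_{i} c_i m_{i+p} = -m_{W+p}$ for $p = 0, \ldots, W-1$. These equations use moments only up to index $2W-1$, which are exactly the ones available. Because $H$ is invertible, they determine $c$ uniquely.
\item The roots of $P$ give the set $\{u_j\}$. Solving the Vandermonde system $\sum_j w_j u_j^k = m_k$ for $k = 0, \ldots, W-1$ then recovers the weight attached to each atom.
\end{enumerate}
In short, any measure with at most $W$ atoms is determined by its first $2W$ moments, and this recovers the pairs $\{(u_j, s_j)\}$ as a set.

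\textbf{Step 2 (second stage).} For each $\ell \ge 2$, the $W$ equations form a Vandermonde system in the unknowns $z_{\ell j} := s_j v_{\ell j}$, with nodes $u_1, \ldots, u_W$. These nodes are now known and distinct, so the matrix is invertible. The solution is therefore unique, and it stays attached to the correct atom $u_j$. Since $s_j > 0$, we get $v_{\ell j} = z_{\ell j}/s_j$ and then $\langle r_\ell, x_j^*\rangle = -\log v_{\ell j}$.

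\textbf{Main obstacle: labeling.} The delicate point is assigning each recovered atom $u_j$ to the correct wavelength $j$. Step 1 recovers the weight attached to each atom, so the assignment is forced whenever the $s_j$ are distinct. If some $s_j$ coincide, those wavelengths are interchangeable in model~\eqref{eq:model}. In that case the signals are identifiable only up to permuting equal-weight indices, and the lemma should be read in that sense. Either way, the key point is that the second stage uses the same nodes $u_j$ as the first. This is what matches the recovered $\langle r_\ell, x_j^*\rangle$ to $\langle r_1, x_j^*\rangle$ consistently across all $\ell$. The remaining verifications, positivity of $H$ and the count of available moments, are routine.
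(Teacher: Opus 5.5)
Your proposal is correct and follows the paper's proof. In both, Prony's method is applied to the moments $m_0=\sum_j s_j, m_1,\ldots,m_{2W-1}$ from the first stage, and then, for each $\ell\ge 2$, a Vandermonde system with nodes $u_j=P_{j,1}$ is solved. You solve for $s_j v_{\ell j}$ with matrix $U$, while the paper solves for $P_{j,\ell}=v_{\ell j}$ with $V=U\operatorname{diag}(s_1,\ldots,s_W)$; these are the same system.

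Your explicit Hankel/characteristic-polynomial derivation adds detail the paper leaves to the citation. So does your remark that wavelengths with equal $s_j$ are identifiable only up to a permutation, which is applied consistently across all $\ell$.
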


\begin{proof}[Proof of Lemma~\ref{lem_1}]
We will perform recovery in two stages.

\begin{enumerate}
\item[(1)] Use the first $2W-1$ measurements~\eqref{eq:stage1} to recover $\langle r_1, x_1^* \rangle, \ldots, \langle r_1, x_W^* \rangle$ by Prony's method.
\item[(2)] For each $\ell = 2, \ldots, d$, solve a linear system using the measurements $\{r_\ell + k r_1\}_{k=0}^{W-1}$ to recover $\langle r_\ell, x_1^* \rangle, \ldots, \langle r_\ell, x_W^* \rangle$.
\end{enumerate}

Define the shorthand
\[
P_{j,\ell} = \exp(-\langle r_\ell, x_j^* \rangle), \qquad j = 1, \ldots, W,\; \ell = 1, \ldots, d.
\]
Note that the map $\langle r_\ell, x_j^* \rangle \mapsto P_{j,\ell}$ is injective, so recovering $P_{j,\ell}$ is equivalent to recovering $\langle r_\ell, x_j^* \rangle$.

\medskip
\noindent\textbf{Stage 1.} From the first stage of measurements $\{k r_1\}_{k=1}^{2W-1}$, we obtain
\[
\sum_{j=1}^{W} s_j P_{j,1}^k \qquad \text{for } k = 1, \ldots, 2W-1.
\]
In addition, since $s_1, \ldots, s_W$ are known, we know $\sum_{j=1}^{W} s_j P_{j,1}^0$.

This is the problem of recovering a measure supported on $W$ points from $2W$ consecutive moments. Prony's method \citep{giesbrecht2006symbolic} will succeed provided $\{P_{j,1}\}_{j=1}^W$ are distinct (which is true by the lemma's assumption).

\medskip
\noindent\textbf{Stage 2.} We can now assume we know $\{P_{j,1}\}_{j=1}^W$ from Stage 1. For each $\ell = 2, \ldots, d$, the second set of measurements gives access to
\begin{align*}
 &\sum_{j=1}^{W} s_j P_{j,\ell}, \quad \sum_{j=1}^{W} s_j P_{j,1} P_{j,\ell}, \\ &\sum_{j=1}^{W} s_j P_{j,1}^2 P_{j,\ell}, \; \ldots, \; \sum_{j=1}^{W} s_j P_{j,1}^{W-1} P_{j,\ell}.   
\end{align*}

This is a linear system in $\{P_{j,\ell}\}_{j=1}^W$, where we are trying to solve
\[
\underbrace{
\begin{pmatrix}
s_1 & s_2 & \cdots & s_W \\
s_1 P_{1,1} & s_2 P_{2,1} & \cdots & s_W P_{W,1} \\
\vdots & \vdots & & \vdots \\
s_1 P_{1,1}^{W-1} & \cdots & \cdots & s_W P_{W,1}^{W-1}
\end{pmatrix}
}_{V}
\begin{pmatrix}
P_{1,\ell} \\ \vdots \\ P_{W,\ell}
\end{pmatrix}
= y.
\]
But
\[
V = \underbrace{
\begin{pmatrix}
1 & 1 & \cdots & 1 \\
P_{1,1} & P_{2,1} & \cdots & P_{W,1} \\
P_{1,1}^2 & P_{2,1}^2 & \cdots & P_{W,1}^2 \\
\vdots & \vdots & & \vdots \\
P_{1,1}^{W-1} & \cdots & \cdots & P_{W,1}^{W-1}
\end{pmatrix}
}_{U}
\cdot \operatorname{diag}(s_1, \ldots, s_W).
\]
Since $U$ is a Vandermonde matrix, we have
\begin{align*}
    \det(V) &= \det(U) \cdot \left( \prod_{j=1}^{W} s_j \right) \\ &= \prod_{1 \le a < b \le W} (P_{b,1} - P_{a,1}) \cdot \left( \prod_{j=1}^{W} s_j \right) \neq 0,
\end{align*}
where the inequality holds since $\{P_{j,1}\}_{j=1}^W$ are distinct.

Thus, under our assumption that $\{P_{j,1}\}_{j=1}^W$ are distinct, this linear system has a unique solution, and we can recover $(P_{j,\ell})_{j=1}^W$. Solving one system for each $\ell = 2, \ldots, d$ recovers all these unknowns.
\end{proof}

\begin{lemma}\label{lem_2}
If $R$ is a uniformly random unitary matrix, then the scalars $\langle r_1, x_1^* \rangle, \ldots, \langle r_1, x_W^* \rangle$ are distinct with probability 1.
\end{lemma}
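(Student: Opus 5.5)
The plan is to reduce the claim to a statement about a single random unit vector. If $R$ is Haar-distributed on the orthogonal (unitary) group, then its first column $r_1$ is uniformly distributed on the unit sphere $\mathbb{S}^{d-1}$. This follows because for any fixed orthogonal $Q$, $QR$ has the same law as $R$, so $Qr_1 \stackrel{d}{=} r_1$, and the uniform measure is the unique rotation-invariant probability measure on the sphere. The key consequence is that the law of $r_1$ is absolutely continuous with respect to surface measure $\sigma$ on $\mathbb{S}^{d-1}$. Hence it suffices to show that the bad event has $\sigma$-measure zero.

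Next I describe the bad event. The scalars fail to be distinct exactly when $\langle r_1, x_a^*\rangle = \langle r_1, x_b^*\rangle$ for some pair $a<b$, that is, when $r_1 \in H_{ab} := \{v : \langle v, x_a^* - x_b^*\rangle = 0\}$. Since the signals $x_1^*,\dots,x_W^*$ are assumed distinct (as stated after model~\eqref{eq:model}), each $x_a^*-x_b^*$ is nonzero. Therefore each $H_{ab}$ is a hyperplane through the origin. The bad event is contained in the finite union $\bigcup_{a<b} (H_{ab}\cap \mathbb{S}^{d-1})$, which has $\binom{W}{2}$ terms. A union bound then reduces the problem to showing $\sigma(H\cap\mathbb{S}^{d-1})=0$ for a single hyperplane $H$.

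The only step needing care is this last one. I would handle it via the Gaussian representation: $r_1 \stackrel{d}{=} g/\|g\|$ with $g\sim N(0,I_d)$. Then
\[
\Pr(r_1\in H) = \Pr(\langle g, u\rangle = 0) = 0
\]
for $u = x_a^*-x_b^*\neq 0$, because $\langle g,u\rangle \sim N(0,\|u\|^2)$ is a nondegenerate Gaussian and so has no atoms. This avoids any direct surface-measure computation. For $d=1$ the sphere is $\{\pm1\}$, and $\langle r_1, u\rangle = \pm u \neq 0$ deterministically, so the conclusion holds trivially in that case as well.

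One subtlety is that the signals $x_j^*$ must be fixed (deterministic, or at least independent of $R$). If they are random but independent of $R$, I would condition on them and apply the argument above pointwise. A second subtlety concerns the word ``unitary'': if $R$ is complex Haar-unitary, the inner products are complex. The same argument still goes through, since $r_1$ is uniform on the complex sphere and $\langle g,u\rangle$ is a nondegenerate complex Gaussian. However, I would state the proof for the real orthogonal case, which is the case consistent with real absorptivities.
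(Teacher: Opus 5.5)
Your proof is correct and follows essentially the same route as the paper. Both arguments reduce the failure event to the pairwise events $\langle r_1, x_a^* - x_b^*\rangle = 0$, note that each is a hyperplane section of the sphere of measure zero under the uniform law of $r_1$, and take a union bound over the $\binom{W}{2}$ pairs. Your additions make explicit details the paper leaves unstated: the Haar-invariance argument that $r_1$ is uniform on $\mathbb{S}^{d-1}$, the Gaussian representation verifying the measure-zero claim, and the requirement that the $x_j^*$ be fixed or independent of $R$.
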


\begin{proof}[Proof of Lemma~\ref{lem_2}]
Fix a pair $j\neq k$. Since $x_1^*,\ldots,x_W^*$ are distinct, we have $\Delta_{jk}:=x_j^*-x_k^*\neq 0$. If $\langle r_1, x_j^*\rangle = \langle r_1, x_k^*\rangle$, then we have the inclusion $r_1 \in \{r\in \mathbb{S}^{d-1} : \langle r, \Delta_{jk}\rangle = 0\}$. However, the set $\{r\in \mathbb{S}^{d-1} : \langle r, \Delta_{jk}\rangle = 0\}$ has measure zero under the uniform distribution on $\mathbb{S}^{d-1}$. Consequently, $\mathbb{P}(\langle r_1, x_j^*\rangle = \langle r_1, x_k^*\rangle) = 0$.
Since there are only $\binom{W}{2}$ distinct pairs, applying the union bound completes the proof.
\end{proof}

\begin{proof}[Proof of Theorem~\ref{thm:injectivity}] 
Combining the two lemmas yields that we can recover the scalars $\langle r_\ell, x_j^*\rangle$ for every $\ell \in [d]$ and $j \in [W]$ with probability $1$.
Since $r_1,\ldots,r_d$ form an orthonormal basis of $\mathbb{R}^d$,
we have
$x_j^* = \sum_{\ell=1}^d \langle r_\ell, x_j^*\rangle\, r_\ell$ for all $j \in [W]$.
We may thus invert this linear system to obtain each $x_j^*$ exactly. \end{proof}

\subsection{Proof of computational hardness}

As mentioned before, our proof proceeds via reduction from the $\mathsf{NP}$-hard \textsc{Partition} problem (also called the SubsetSum problem). The ideas are inspired by reductions to  other problems in statistical signal processing~\cite[e.g.][]{fickus2014phase,yi2014alternating,pananjady2017linear}.

\begin{proposition}\label{prop:partition}
Given a collection of $d$ nonzero rational\footnote{For convenience, we ignore complexity-theoretic issues with encoding real numbers; all our arguments can be made fully rigorous with suitable rounding.} numbers $(b_1, \ldots, b_d)$, the problem of deciding if there exists a subset $S \subseteq [d]$ such that $\sum_{k \in S} b_k = \sum_{k \in S^c} b_k$ is $\textsf{NP}$-hard.
\end{proposition}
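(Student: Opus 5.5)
This is the classical \textsc{Partition} problem, one of Karp's original $\mathsf{NP}$-complete problems, with the harmless restriction that all entries are nonzero. The plan is a reduction from the standard formulation. In that formulation we are given positive integers $c_1,\ldots,c_m$ (equivalently, nonnegative integers) and must decide whether some subset has sum equal to half the total. Its $\mathsf{NP}$-hardness follows from \textsc{SubsetSum}, which is itself hard by the textbook reduction from \textsc{3-SAT} or \textsc{Exact Cover}~\citep{papadimitriou1998combinatorial}. We take that hardness as given and only need to handle the two formal differences from our statement.

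The first difference is that our numbers are rational rather than integer. This only widens the input class, since integers are rationals, so hardness for the integer case immediately gives hardness for the rational case.

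The second difference is the nonzero requirement. Given a standard instance $(c_1,\ldots,c_m)$ with $c_k \geq 0$, we delete every zero entry to obtain $(b_1,\ldots,b_d)$ with $d \le m$. Zeros contribute nothing to either side of $\sum_{k\in S} b_k = \sum_{k\in S^c} b_k$, so a balanced split exists for the original instance if and only if one exists for the reduced instance. If every entry is zero, the answer is trivially yes, and we output a fixed yes-instance such as $(1,1)$. The map runs in linear time, which completes the polynomial-time reduction.

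Finally, membership in $\mathsf{NP}$ is immediate, since a subset $S$ is a certificate checkable by exact rational arithmetic. We do not need this, because the statement only asserts hardness. There is no real obstacle in this argument. The only point requiring care is the footnote's encoding convention: the instances are rationals with polynomially bounded bit length, which guarantees that the downstream reduction in Theorem~\ref{thm:NP-hard} stays polynomial.
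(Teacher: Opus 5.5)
Your argument is correct and matches the paper, which gives no separate proof and simply invokes the classical $\mathsf{NP}$-hardness of \textsc{Partition}; your observations that integers are rationals and that zero entries can be deleted are valid (the deletion is redundant, since standard instances already consist of positive integers). Your closing remark overreaches slightly: polynomial bit length of the $b_i$ does not by itself make the reduction in Theorem~\ref{thm:NP-hard} polynomial, because the $y_i$ there are transcendental and of magnitude about $e^{b_i}$ with $b_i$ possibly exponentially large (\textsc{Partition} is only weakly $\mathsf{NP}$-hard), and this is exactly the issue the paper's footnote sets aside.
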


\begin{proof}[Proof of Theorem~\ref{thm:NP-hard}]
Our reduction proceeds by constructing data  $(A, y)$ complying with the CT model~\eqref{eq:model-simple} from an instance of \textsc{Partition}. 
Consider an instance of the \textsc{Partition} problem $(b_1, \ldots, b_d)$, and let $n = 2d + 1$. We construct the $n \times d$ matrix
$A = \begin{bmatrix}
 I_d \\
 -I_d \\
 1^\top
\end{bmatrix}$ and the vector $y \in \real^n$ where
\begin{align} \label{eq:eq:y-LHS}
y_i = 
\begin{cases}
\exp(-b_i/2) \cdot \cosh(b_i/2) \quad &\text{ if } i = 1, \ldots, d \\
\exp(b_{i-d}/2) \cdot \cosh(b_{i-d}/2) \quad &\text{ if } i = d+1, \ldots, 2d \\
\exp(- (\sum_{i = 1}^d b_i)/2)  \quad &\text{ if } i = 2d + 1.
\end{cases}
\end{align}
To characterize the solution $(x^*_1, x^*_2)$ consistent with the data $(A, y)$, note that the RHS $R_i$ in model~\eqref{eq:model-simple} can be written as
\begin{align*}
R_i &= \exp( - \inprod{a_i}{x^*_1 + x^*_2} / 2) \cdot \\ &\left( \frac{1}{2} \exp( - \inprod{a_i}{x^*_1 - x^*_2}/2 ) + \frac{1}{2} \exp(\inprod{a_i}{x^*_1 - x^*_2}/2 ) \right) \\
&= \exp( - \inprod{a_i}{x^*_1 + x^*_2} / 2) \cdot \cosh ( \inprod{a_i}{x^*_1 - x^*_2}/2).
\end{align*}
For our particular choice of matrix $A$, we further have
\begin{align} \label{eq:RHS}
R_i =
\begin{cases}
  e^{-(x^*_1 + x^*_2)_i/2}\cosh((x^*_1 - x^*_2)_i/2), & i = 1,\ldots,d,\\
  e^{(x^*_1 + x^*_2)_{i-d}/2}\cosh((x^*_1 - x^*_2)_{i-d}/2), & i = d+1,\ldots,2d.
\end{cases}
\end{align}
For a solution to be consistent with the first $2d$ equations in the system, $y_i$ given by Eq.~\eqref{eq:eq:y-LHS} must be equated with the corresponding $R_i$ in Eq.~\eqref{eq:RHS} (for $i = 1, \ldots, d$). For each $i \in [d]$, considering the $i$-th and $(i + d)$-th equations together yields the following pair of simultaneous equations: 
\begin{align*}
&e^{-(x^*_1 + x^*_2)_i/2}\cosh\!\big((x^*_1 - x^*_2)_i/2\big) = e^{-b_i/2}\cosh(b_i/2),\\
&e^{(x^*_1 + x^*_2)_i/2}\cosh\!\big((x^*_1 - x^*_2)_i/2\big) = e^{b_i/2}\cosh(b_i/2).
\end{align*}

Since $\cosh(z)$ is always nonzero for real $z$, we have 
\[
\frac{\cosh(b_i/2)}{\cosh ( (x^*_1 - x^*_2)_i /2)} = \exp(b_i/2) \cdot \exp( - (x^*_1 + x^*_2)_i / 2)
\]
from the first equation. Substituting this into the second equation yields
$\exp((x^*_1 + x^*_2)_i) = \exp(b_i)$, which, owing to the invertibility of the map $z \mapsto e^z$, implies that 
$(x^*_1 + x^*_2)_i = b_i$. Backsubstituting and noting that $\cosh$ is an even function yields $(x^*_1 - x^*_2)^2_i = b^2_i$. Combining these two facts, for each $i \in [d]$ we must have 
\begin{align*}
\{ (x^*_1)_i = b_i \; \text{and} \; (x^*_2)_i = 0 \} \; \text{ or } \;  \{(x^*_1)_i = 0 \; \text{and} \; (x^*_2)_i = b_i \}.
\end{align*}
For any such choice, let $S = \mathsf{supp}(x^*_1)$, so that $S^c = \mathsf{supp}(x^*_2)$.
Using that for all $i \in [d]$ we have $(x^*_1 + x^*_2)_i = b_i$, let us now check the $(2d + 1)$-th equation. Since $a_{2d+1}$ is the all-ones vector, we have
\begin{align*}
R_{2d+1} &= \exp( - \inprod{1}{x^*_1 + x^*_2} / 2) \cdot \cosh ( \inprod{1}{x^*_1 - x^*_2}/2) \\ &= \exp \left( - (\sum_{i = 1}^d b_i)/2 \right) \cdot \cosh \left( (\sum_{i \in S} b_i - \sum_{i \in S^c} b_i ) / 2 \right).
\end{align*}
But $y_{2d+ 1} = \exp \left(- (\sum_{i = 1}^d b_i)/2 \right)$ by Eq.~\eqref{eq:eq:y-LHS}. If we have a consistent solution to the system, then $R_{2d + 1} = y_{2d + 1}$, or equivalently,
$
\cosh \left( (\sum_{i \in S} b_i - \sum_{i \in S^c} b_i ) / 2 \right) = 1,
$
thereby implying that we have found a set $S = \mathsf{supp}(x^*_1) \subseteq [d]$ with 
\[
\sum_{i \in S} b_i = \sum_{i \in S^c} b_i. 
\]
Any consistent solution to the CT system $(A, y)$ thus produces a ``YES" certificate for \textsc{Partition}, and this completes the reduction. Since our reduction was polynomial-time computable, Proposition~\ref{prop:partition} implies that deciding if a consistent solution exists in the model~\eqref{eq:model-simple} is $\mathsf{NP}$-hard.
\end{proof}

\section{Conclusion} \label{sec:conclusion}

In this work, we study the injectivity and computational complexity of the polychromatic CT model~\eqref{eq:model}. For injectivity, we leverage a construction inspired by Prony's method and show that by using $n = Wd + W - 1$ random measurements, the unknown signals can be uniquely recovered with probability $1$. For complexity, we prove that even with $W = 2$ wavelengths, the problem of signal recovery is $\mathsf{NP}$-hard by applying a reduction from the \textsc{Partition} problem.

Our results are stated for general measurement vectors $a_i$, but in computed tomography these vectors are far from arbitrary. In CT, the $a_i$ vectors are highly structured as they arise as discretized line integrals of the Radon transform \citep{natterer2001mathematics}.
Understanding analogous questions about injectivity and complexity in this structured setting remains open. Another natural question concerns the general multimaterial CT problem. While the results in our work address the case $M = W$, understanding injectivity thresholds in the general $M < W$ case remains an interesting open problem.

\subsection*{Acknowledgments}
This work was supported in part by the National Science Foundation through grant CCF-2107455, and a Google Research Scholar Award. We thank Mengqi Lou, Kabir Verchand, and Sara Fridovich-Keil for helpful discussions.

\bibliographystyle{abbrvnat}
\bibliography{refs}

\begin{thebibliography}{13}
\providecommand{\natexlab}[1]{#1}
\providecommand{\url}[1]{\texttt{#1}}
\expandafter\ifx\csname urlstyle\endcsname\relax
  \providecommand{\doi}[1]{doi: #1}\else
  \providecommand{\doi}{doi: \begingroup \urlstyle{rm}\Url}\fi

\bibitem[Barber and Sidky(2024)]{barber2024convergence}
R.~F. Barber and E.~Y. Sidky.
\newblock Convergence for nonconvex {ADMM}, with applications to {CT} imaging.
\newblock \emph{Journal of Machine Learning Research}, 25\penalty0 (38):\penalty0 1--46, 2024.

\bibitem[Charisopoulos and Willett(2025)]{charisopoulos2025nonlinear}
V.~Charisopoulos and R.~Willett.
\newblock Nonlinear tomographic reconstruction via nonsmooth optimization.
\newblock \emph{SIAM Journal on Mathematics of Data Science}, 7\penalty0 (2):\penalty0 699--722, 2025.

\bibitem[Fickus et~al.(2014)Fickus, Mixon, Nelson, and Wang]{fickus2014phase}
M.~Fickus, D.~G. Mixon, A.~A. Nelson, and Y.~Wang.
\newblock Phase retrieval from very few measurements.
\newblock \emph{Linear Algebra and its Applications}, 449:\penalty0 475--499, 2014.

\bibitem[Fridovich-Keil et~al.(2026)Fridovich-Keil, Valdivia, Wetzstein, Recht, and Soltanolkotabi]{fridovich2026gradient}
S.~Fridovich-Keil, F.~Valdivia, G.~Wetzstein, B.~Recht, and M.~Soltanolkotabi.
\newblock Gradient descent provably solves nonlinear tomographic reconstruction.
\newblock \emph{IEEE Transactions on Information Theory}, 2026.

\bibitem[Giesbrecht et~al.(2006)Giesbrecht, Labahn, and Lee]{giesbrecht2006symbolic}
M.~Giesbrecht, G.~Labahn, and W.-S. Lee.
\newblock Symbolic-numeric sparse interpolation of multivariate polynomials.
\newblock In \emph{Proceedings of the 2006 International Symposium on Symbolic and Algebraic Computation}, pages 116--123, 2006.

\bibitem[Kim et~al.(2026)Kim, Pananjady, Pourmorteza, and Fridovich-Keil]{kim2026perfusion}
N.~Kim, A.~Pananjady, A.~Pourmorteza, and S.~Fridovich-Keil.
\newblock Perfusion imaging and single material reconstruction in polychromatic photon counting {CT}.
\newblock \emph{arXiv preprint arXiv:2602.02713}, 2026.

\bibitem[Lou et~al.(2026)Lou, Verchand, Fridovich-Keil, and Pananjady]{lou2026accurate}
M.~Lou, K.~Verchand, S.~Fridovich-Keil, and A.~Pananjady.
\newblock Accurate, provable, and fast polychromatic tomographic reconstruction: A variational inequality approach.
\newblock \emph{SIAM Journal on Imaging Sciences}, 19\penalty0 (1):\penalty0 446--479, 2026.

\bibitem[Natterer(2001)]{natterer2001mathematics}
F.~Natterer.
\newblock \emph{The mathematics of computerized tomography}.
\newblock SIAM, 2001.

\bibitem[Pananjady et~al.(2017)Pananjady, Wainwright, and Courtade]{pananjady2017linear}
A.~Pananjady, M.~J. Wainwright, and T.~A. Courtade.
\newblock Linear regression with shuffled data: Statistical and computational limits of permutation recovery.
\newblock \emph{IEEE Transactions on Information Theory}, 64\penalty0 (5):\penalty0 3286--3300, 2017.

\bibitem[Papadimitriou and Steiglitz(1998)]{papadimitriou1998combinatorial}
C.~H. Papadimitriou and K.~Steiglitz.
\newblock \emph{Combinatorial optimization: Algorithms and complexity}.
\newblock Courier Corporation, 1998.

\bibitem[Schmidt et~al.(2022)Schmidt, Sammut, Barber, Pan, and Sidky]{https://doi.org/10.1002/mp.15621}
T.~G. Schmidt, B.~A. Sammut, R.~F. Barber, X.~Pan, and E.~Y. Sidky.
\newblock Addressing {CT} metal artifacts using photon-counting detectors and one-step spectral {CT} image reconstruction.
\newblock \emph{Medical Physics}, 49\penalty0 (5):\penalty0 3021--3040, 2022.
\newblock \doi{https://doi.org/10.1002/mp.15621}.
\newblock URL \url{https://aapm.onlinelibrary.wiley.com/doi/abs/10.1002/mp.15621}.

\bibitem[Tang(2023)]{tang2023spectral}
X.~Tang.
\newblock \emph{Spectral Multi-detector Computed Tomography (sMDCT): Data Acquisition, Image Formation, Quality Assessment and Contrast Enhancement}.
\newblock CRC Press, 2023.

\bibitem[Yi et~al.(2014)Yi, Caramanis, and Sanghavi]{yi2014alternating}
X.~Yi, C.~Caramanis, and S.~Sanghavi.
\newblock Alternating minimization for mixed linear regression.
\newblock In \emph{International Conference on Machine Learning}, pages 613--621. PMLR, 2014.

\end{thebibliography}

\end{document}